\newtheorem{theorem}{Theorem}
\newtheorem{lemma}{Lemma}
\newtheorem{problem}{Problem}
\newtheorem{assumption}{Assumption}
\newtheorem{corollary}{Corollary}
\newcommand{\ie}{{\it i.e.},\ }
\DeclareMathOperator*{\argmin}{\arg\!\min}
\newcommand{\Pe}{P_{\mathrm{e}}}
\newcommand{\Pei}{P_{\mathrm{e}}(i)}
\newcommand{\JG}[1]{#1}
\newcommand{\JGb}[1]{#1}
\begin{document}

%
%
%
\title{Optimization of Caching Devices with Geometric Constraints}

\author[a]{Konstantin Avrachenkov}
\author[b]{Xinwei Bai}
\author[b]{Jasper Goseling}

\affil[a]{INRIA Sophia Antipolis, France}
\affil[b]{Stochastic Operations Research, University of Twente,~The~Netherlands}
\affil[{ }]{k.avrachenkov@sophia.inria.fr, x.bai@utwente.nl, j.goseling@utwente.nl}

\maketitle

%
%
%
\begin{abstract}
It has been recently advocated that in large communication systems it is beneficial both for the users and for the network as a whole  to store content closer to users. One particular implementation of such an approach is to co-locate caches with wireless base stations. In this paper we study geographically distributed caching of a fixed collection of files. We model cache placement with the help of stochastic geometry and optimize the allocation of storage capacity among files in order to minimize the cache miss probability. We consider both per cache capacity constraints as well as an average capacity constraint over all caches. The case of per cache capacity constraints can be efficiently solved using dynamic programming, whereas the case of the average constraint leads to a convex optimization problem. We demonstrate that the average constraint leads to significantly smaller cache miss probability. Finally, we suggest a simple LRU-based policy for geographically distributed caching and show that its performance is close to the optimal.
\end{abstract}

%
%
%
\section{Introduction}
\label{sec:introduction}

%
%

We consider caching of a collection of files by a set of geographically distributed storage devices with wireless communications
capabilities and random network coding. Clients can retrieve cached data from all devices that are within its connectivity radius.
Since the caching devices have limited storage capacity, not all files can be stored in all caches. Therefore, there is a positive
probability that a file that is requested by a client cannot be retrieved from the caching devices that are within range and
a cache miss occurs. The general aim of this paper is to optimize the cache allocation so to minimize the cache miss probability.

It has been recently advocated that in large communication systems it is beneficial both for the users and for the network
as a whole to store content closer to users. This idea can be realized by Information Centric Networking (ICN),
a new paradigm for the network architecture where the data is addressed by its name or content directly rather than by its physical location. There is no predefined location for the data in ICN and the content is naturally cached along the retrieval path. Examples of the ICN architecture are CCN/NDN \cite{Jacobson:2009}, DONA \cite{Koponen:2007} and TRIAD \cite{Gritter:2001}. Our results can be useful for the design of the wireless networks with the ICN architecture in which case cellular base stations also serve as caches.
Wireless sensor networks represent another potential application of our results. Sensors have severe limitation on both memory
and transmission capability. It might be useful for sensors to have access to some aggregated characteristics in addition to the local ones. In such a case, our results provide optimal distributed allocation of the aggregated characteristics.

%
%
%

Let us elaborate on the problem formulation in further details. Storage (or caching) devices are placed in the plane according to a homogeneous spatial Poisson process. The homogeneous spatial Poisson process is accepted for modelling the location of base stations
providing a good compromise between realistic representation of the wireless network and mathematical tractability \cite{baccelli1,baccelli2,haenggi2009}.
For some cases, e.g., for Sydney base station network \cite{lee2013}, it has been shown that the spatial homogeneous Poisson process represents very well the distribution of base stations.
In other cases, a non-homogeneous Poisson process can be more appropriate for modelling the distribution
of base stations. In fact, some results of the present work can be extended to the case of non-homogeneous
Poisson process and we discuss such extensions later in the paper.
The size of the file catalog is finite and fixed. A client will request one of the files from the catalog at random according to a known file popularity distribution that is the same for all clients. In particular, for numerical illustration purpose we will consider the case that file popularities follow a Zipf distribution. For the sake of tractable performance evaluation analysis, we make a technical assumption that files consist of the same number of chunks of a fixed size. We suggest to use random linear network coding, in which case linear combination of chunks can be stored in the caching devices. As was shown in~\cite{altman2013coding},
the network coding based allocation strategy outperforms a strategy without coding for a wide range of performance measures
and any spatial distribution of caches.

Our interest in the current paper is in the case when the caches are reachable only within a fixed distance to the client.
This is a standard model in wireless networks which gives high level but still quite accurate representation
of a wireless connection \cite{baccelli1,baccelli2}.
Our goal is to minimize the cache miss probability, which is the probability that a client cannot get the requested file
from the caches within range. Since the probability of not recovering a file from coded chunks is negligible in comparison to
the overall cache miss probability, we concentrate solely on the calculation of the cache miss probability
and on the optimization of the system with respect to this metric.

We have multiple files and a limited memory in each storage device. Thus, the question is how many linear combinations
of each file to store in a particular storage device. Initially, we consider the case when we make the same allocation in all caches,
\ie each cache stores the same number of linear combinations of each file.  As a consequence we guarantee a capacity constraint
on each individual cache. We formulate an optimization problem with a non-convex objective function and linear constraints.
We demonstrate that this problem is a generalization of an unbounded knapsack problem~\cite{martello1990knapsack}. In particular
it is a separable nonlinear integer program, which can be solved using dynamic programming. In addition to providing
a formal statement of this result, we give exact closed form results for some special cases of the problem as well as insight
into the structure of the solution in the general case.

The above formulation leads to the same allocation in each storage device, which likely leads to inefficient memory utilization and
to the lack of file diversity. Thus, we then turn our attention to a relaxation of the problem in which, instead of imposing
a hard capacity constraint on each of the caches, we require that the average storage space used in the caches is upper bounded.
In particular, we consider cache allocation strategies in which the number of linear combination to store for a file in a cache
is a random variable. The number of such combinations is independently and identically decided for each cache. We impose an average
capacity constraint on the number of chunks stored in a caching device, where the average is over the caching devices. We analyze
the resulting optimal strategy for the case when files consist of a single piece and show that the performance under an average capacity significantly outperforms the optimal performance under a per cache capacity constraint.

Finally, we consider a dynamic scenario when the clients arrive over time. We study two LRU-based caching policies, cooperative
and fully distributed. Both policies demonstrate \JGb{that} performance \JGb{is} not far from the optimal one and \JGb{that} there is a small loss of efficiency
in the fully distributed case compared to the cooperative case. This indicates that a simple distributed LRU-based caching policy
can be safely deployed in practice for geographically distributed caches. \JGb{Also, it indicates that our results on the optimal placement policies can provide insight into the performance in the dynamic setting.  }

Let us outline the organization of the paper. In Section~\ref{sec:model} we define the model, discuss the constraints and
optimization criterion. The problem with per cache constraints is analysed in Section~\ref{sec:analysis}. In particular,
we provide structural insight into the optimal storage allocation strategy and show that the problem is a generalization
of the unbounded knapsack problem and can be solved by dynamic programming approach. Then, in Section~~\ref{sec:average}
we introduce the average constraint, which makes memory usage more efficient and increases file diversity. In an important
particular case we are able to solve the average constraint problem in a closed form.
In Section~\ref{sec:dynamic} we present distributed and cooperative LRU-based policies. In Section~\ref{sec:numerical}
we demonstrate that the performance of the distributed LRU-based policy is not far from the optimal performance.
The numerical results of Section~\ref{sec:numerical} also confirm that the average constraint in comparison with
the per cache constraint, brings improved efficiency and file diversity.
Finally, in Section~\ref{sec:disc} we provide a discussion of our result and an outlook on future research.

%
%
%
\section{Related work}
\label{sec:related}

Literature on caching is vast. Therefore, we limit our discussion to work on caching which we feel is most relevant to the present work. 
The application of network coding for distributed storage is studied in~\cite{dimakis2010network} and in~\cite{golrezaei13commmag,ji2013throughput} specifically for the case of content distribution in wireless networks. The use of  coding was also explored in~\cite{dimakis2005ubiquitous} where it was shown how to efficiently allocate the data at  caches with the aim of ensuring that any sufficiently large subset of caches can provide the complete data. The difference with the current work is that we are taking the geometry of the deployment of the storage devices into account.
In~\cite{maddah2014fundamental}, see also \cite{pedarsani2014online},
coding strategies for networks of caches are presented, where each user has access to a single cache and a direct link to the source. It is demonstrated how coding helps to reduce the load on the link between the caches and the source. Note that we assume that different transmissions from caches to the clients are orthogonal, for instance by separating them in time or frequency. In~\cite{6225433} the impact of non-orthogonal transmissions is considered and scaling results are derived on the best achievable transmission rates. In~\cite{hachem2014coded} a heterogeneous system of small coverage access points and large coverage base stations is considered.

Systems of distributed storage devices or caches can be classified according to the amount of coordination between the devices. In~\cite{rosensweig09infocom} an approach with implicit coordination is proposed. Networks of caches are notoriously difficult to analyze. Only some very particular topologies and caching strategies (see \JGb{\cite{fofack2014performance}} and references therein) or approximations \cite{Rosensweig2010,Che:2006} have been studied. In a recent work \cite{Rosensweig2013} ergodicity of cache networks has been investigated. Using continuous geometrical constraints on cache placement instead of combinatorial constraints allows us to obtain exact analytical results.

Other work on caching in wireless networks is, for instance, \cite{nuggehalli2003energy, jin2005content, yin2006supporting}. In \cite{nuggehalli2003energy} the authors analyze the trade-off between energy consumption and the retrieval delay of data from the caches. In \cite{jin2005content}, the authors consider the optimal number of replicas of data such that the distance between a requesting node and the nearest replica is minimized. Data sharing among multiple caches  such that the bandwidth consumption and the data retrieval delay are minimal is considered in  \cite{yin2006supporting}. None of~\cite{nuggehalli2003energy, jin2005content, yin2006supporting} are considering coded caching strategies.

We would like to emphasize that except for~\cite{golrezaei13commmag,ji2013throughput} none of the above mentioned works considered continuous geometric constraints on storage device placement.

Networks of wireless caches in the plane, \ie with geometric constraints, were first studied in~\cite{altman2013coding} for the case of single file. The tradeoff between the retrieval performance and the deployment cost in terms of number of caches and their capacity was studied in~ \cite{mitici2013optimal}. Both papers considered the storage of just one single data file.
In \cite{bastug2014cache,bastug2014social} the framework of stochastic geometry was applied to performance evaluation
of a network of small base stations with emphasis on physical layer.
The question of optimal storage allocation was not studied there.
In the recent work \cite{BG15} a model very similar to our average capacity constraint model has been
analysed from a different angle. We would like to note that the present work has been done before \cite{BG15}.



%
%
%
\section{Model and Notation}
\label{sec:model}

Caching devices are placed in the plane $\mathbb{R}^2$ according to a homogeneous spatial Poisson process with density $\lambda$.
The spatial Poisson process is known to be an appropriate generic model for location of base stations or sensors in wireless
networks \cite{baccelli1,baccelli2,haenggi2009,lee2013}.
The devices serve as caches for a catalog of $L$ data files. Without loss of generality, we consider a single client
that is located at an arbitrary location in the plane \cite{baccelli1,baccelli2} and can access only caches within radius $r$.
Since the caches are distributed according to a homogeneous spatial Poisson process, the number of caches within radius $r$
follows a Poisson distribution with parameter $x=\lambda\pi r^2$. That is,
$$
P(n\ {\rm caches\ within\ radius}\ r)=\frac{x^n}{n!}e^{-x}.
$$
The parameter $x$ has an interpretation as an expected number of devices inside the area of size $\pi r^2$.
The client is interested in retrieving one of the $L$ files. The file that is required by the client is selected at random. The probability that the $i$-th file is selected is $p_i$, $i=1,\dots,L$. Without loss of generality, we assume that $p_1\ge p_2\ge\ldots \ge p_L$. The probability distribution $p_i$ represents the popularity of the files. Most of the results in this paper will be obtained for an arbitrary file popularity distribution. In some cases, in particular for illustration of our results by numerical examples, we consider the Zipf distribution. Let $p_i^z$ denote the probability of file $i$ under a Zipf distribution~\cite{mahanti2013tale} with parameter $s>0$, \ie $p_i^z=i^{-s}/\sum_{k=1}^Lk^{-s}$.

We suppose that files consist of $N$ chunks (ICN terminology). For the sake of tractability, we assume that all packets of all files are of the same size. Each caching device can store at most $C$ packets. We assume that $C<LN$, \ie we cannot store all files in a device. Therefore, a \JGb{means} of allocating (parts of) files to caches needs to be devised. Inevitably, for any allocation strategy there will be a positive probability that the client cannot retrieve the desired file from the caches within its range. Our interest in this paper is \JGb{in} minimizing this probability, the cache miss probability, by optimizing the storage strategy. We allow for caches to store only part of a file. Also, we allow for random linear network coding to be used. As a consequence, caches do not store the data packets themselves, but store instead one or more random linear combinations of the data packets of a file. The purpose of using network coding is that with high probability in order to recover a file it is sufficient for the client to retrieve any $N$ linear combinations of packets. In a network of caches the probability that the client cannot recover a file from $N$ linearly coded packets is negligible compared to the overall
cache miss probability \cite{altman2013coding,ncfundamentals}. Therefore, we ignore this event in this paper and use the following assumption.
\begin{assumption} \label{ass:suff}
A file can be recovered from any set of $N$ linear combinations of packets from that file.
\end{assumption}

The storage strategy is based on storing in each caching device $n_i$ linear combinations of the packets of file $i$.
Since the considered model is space homogeneous, all caching devices follow the same caching strategy.
The capacity constraint that we need to satisfy is
\begin{equation}
\sum_{i=1}^L n_i \leq C.
\end{equation}
Now in order to retrieve file $i$ the client needs to obtain at least $N$ linear combinations for that file. If the caches within radius $r$ cannot provide these linear combinations, a cache miss occurs. Our interest in this paper is in minimizing the cache miss probility by optimizing the values $n_i$, $i=1,\dots,L$. The probability is over the placement of the caches as well as the selection of the file by the client. More precisely, our performance measure  of interest is
\begin{equation}
 \Pe= \sum_{i=1}^L p_i \Pei,
\end{equation}
where
\begin{multline}
\Pei = \Pr\{\text{client cannot obtain file } i \text{ from caches within distance } r\}.
\end{multline}

In this section we have defined only the problem with per cache capacity constraints. The relaxation to average constraints
is defined and analysed in Section~\ref{sec:average}. In the next section we first analyze the case of per cache capacity constraints.
Then, in Section~\ref{sec:dynamic} we consider a dynamic scenario with arriving and departing users.

%
%

%
%
%
\section{Individual Cache Capacity Constraints}
\label{sec:analysis}

We start this section with a formulation of the optimization problem in Subsection~\ref{ssec:problem}. Next we provide some results on the structure of the optimal solution to this problem in Subsection~\ref{ssec:structure}. In Subsection~\ref{ssec:n1opt} we give an analytical expression for the optimal solution for the case that files consist of a single chunk. Finally, in Subsection~\ref{ssec:dynprog} we provide a dynamic programming approach for solving the general case.

\subsection{Formulation of optimization problem} \label{ssec:problem}

The client can connect to all caches that are within radius $r$. Since the caches are distributed according to a homogeneous Poisson process, the number of caches within radius $r$ follows a Poisson distribution with parameter $x=\lambda\pi r^2$.
%
When the client wants file $i$, $\lceil N/n_i \rceil$ caches are needed to get the complete file. This request will be missed if there are less than $\lceil N/n_i \rceil$ caches within radius $r$ to the client. Therefore, the miss probability for file $i$ is given by
\begin{eqnarray}
\nonumber\Pei&=&\sum_{k=0}^{\lceil N/n_i \rceil-1}P(k\ {\rm caches\ within\ radius}\ r)\\
\nonumber &=&\sum_{k=0}^{\lceil N/n_i \rceil-1}\frac{x^k}{k!}e^{-x}\\
&=& Q(\lceil N/n_i \rceil,x),
\end{eqnarray}
where $Q$ is the regularized incomplete Gamma function.
Since file $i$ is requested with probability $p_i$, the expected miss probability is
$$
\Pe = \sum_{i=1}^L p_iQ(\lceil N/n_i\rceil,x).
$$
For notational convenience, let the function $f$ be defined as
\begin{equation} \label{eq:f}
f(n_i)=Q(\lceil N/n_i\rceil,x).
\end{equation}
From the above it follows that the minimization of the cache miss probability $P_e$ is given by the following optimization problem.
\begin{problem}\label{pb:opt}
\begin{eqnarray}\label{opt}
\nonumber {\rm min} & &\sum_{i=1}^L p_if(n_i)\\
\nonumber {\rm subject\  to} & &\sum_{i=1}^L n_i=C,\\
 & & n_i\in \mathbb{N}, \ i=1,\dots,L.
\end{eqnarray}
\end{problem}
Note that the objective function of this optimization problem is non-increasing in $n_i$. However, for $N>1$ it is not convex. Since the objective function is non-increasing, we consider only equality $\sum_{i=1}^L n_i=C$ in the capacity constraint.

\subsection{Structure of the optimal solution} \label{ssec:structure}
Our first result deals with the structure of the optimal solution. In particular we demonstrate that the number of linear combinations stored for file $i$ is a non-increasing function in $i$.
\begin{theorem} \label{th:structure}
Let $\bar{n}=(\bar n_1,\dots,\bar n_L)$ be an optimal solution to Problem~\ref{pb:opt}. Then $\bar n_1\ge \bar n_2\ge \dots \ge \bar n_L$.
\end{theorem}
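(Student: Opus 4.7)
The plan is a direct pairwise exchange (swap) argument by contradiction, applied to the given optimal $\bar n$. Suppose $\bar n$ is optimal and violates monotonicity, so that there exist indices $i<j$ with $\bar n_i<\bar n_j$. I would form $\bar n'$ by swapping the entries at positions $i$ and $j$ and leaving all others untouched: $\bar n'_i=\bar n_j$, $\bar n'_j=\bar n_i$, $\bar n'_k=\bar n_k$ for $k\notin\{i,j\}$. This swap manifestly preserves the capacity constraint $\sum_k n'_k=C$ and integrality, so $\bar n'$ is feasible.

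The first substantive step is to isolate the change in the objective. Since only two summands differ, the difference factors as
$$
P_e(\bar n) - P_e(\bar n') \;=\; (p_i-p_j)\,\bigl(f(\bar n_i)-f(\bar n_j)\bigr).
$$
Both factors are non-negative: $p_i-p_j\ge 0$ by the assumed ordering $p_1\ge\cdots\ge p_L$, and $f(\bar n_i)-f(\bar n_j)\ge 0$ because $f(n)=Q(\lceil N/n\rceil,x)$ is non-increasing in $n$ (the inner map $\lceil N/n\rceil$ is non-increasing in $n$, and $Q(\cdot,x)$ is non-decreasing in its first argument), combined with $\bar n_i<\bar n_j$. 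Hence $P_e(\bar n')\le P_e(\bar n)$, and $\bar n'$ is optimal as well.

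To convert this into a genuine contradiction with the optimality of $\bar n$, I would argue that the weak inequality is actually forced to be strict, or that a remaining tie forces $\bar n$ to coincide with its monotone rearrangement. In the generic regime $p_i>p_j$ and $f(\bar n_i)>f(\bar n_j)$, the product is strictly positive and $\bar n$ cannot be optimal, ruling out the offending pair. In the tie regimes one argues directly: if $p_i=p_j$, files $i$ and $j$ enter $P_e$ symmetrically, and the labeling convention $p_1\ge\cdots\ge p_L$ lets us read $\bar n$ with the larger of $\bar n_i,\bar n_j$ at the earlier index; if $f(\bar n_i)=f(\bar n_j)$, the two values lie in the same constancy interval $\{n:\lceil N/n\rceil=k\}$, the swap preserves every term of $P_e$, and the exchanged tuple $\bar n'$ (now monotone in this pair) is an equally valid representative of $\bar n$. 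Iterating the swap removes all inversions in finitely many steps, the whole chain stays in the optimal set, and the endpoint must then be identified with $\bar n$.

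The main obstacle is this last upgrade. Because $f$ is piecewise constant on the positive integers, strict monotonicity of $f$ between $\bar n_i$ and $\bar n_j$ cannot be assumed, and the pure exchange inequality is only weak. The crux of the proof is the observation that whenever the exchange does \emph{not} strictly improve the objective, at least one of the two factors vanishes, and in each such case the swap is a symmetry of the problem, so iterating swaps never leaves the optimal set and terminates at the non-increasing rearrangement — which, by the very fact that every step preserved both optimality and the multiset of values, can be taken as $\bar n$ itself, yielding $\bar n_1\ge\bar n_2\ge\cdots\ge\bar n_L$.
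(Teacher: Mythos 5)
Your argument is the same pairwise exchange the paper itself uses: swap an inverted pair $(i,j)$ with $i<j$ and $\bar n_i<\bar n_j$, note feasibility is preserved, and compute $P_e(\bar n)-P_e(\bar n')=(p_i-p_j)\bigl(f(\bar n_i)-f(\bar n_j)\bigr)\ge 0$ from the ordering of the $p_i$ and the monotonicity of $f$. That core computation is correct and matches the paper's line for line.

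The one place you go beyond the paper is the final ``upgrade,'' and that is where a genuine gap sits. The swap chain terminates at the non-increasing rearrangement of $\bar n$, which is in general a \emph{different} feasible point; asserting that it ``can be taken as $\bar n$ itself'' is a relabeling, not a proof that the originally given $\bar n$ is monotone. Moreover, no such upgrade is possible, because the statement read literally (every optimal solution is non-increasing) fails in the tie cases you correctly identified: if $p_i=p_j$ for some $i<j$, or if $\lceil N/\bar n_i\rceil=\lceil N/\bar n_j\rceil$ (e.g.\ $N=5$, $\bar n_i=3$, $\bar n_j=4$, so $f(3)=f(4)$), the swap leaves the objective unchanged and a non-monotone optimizer coexists with a monotone one of equal value. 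What the exchange argument establishes --- and all that the paper's own proof establishes, since it ends with the same ``keep exchanging'' step --- is that \emph{there exists} an optimal solution with $\bar n_1\ge\cdots\ge\bar n_L$. You were right that the weak inequality is the crux; the honest resolution is to weaken the conclusion to existence (or to add strictness hypotheses such as $p_1>\cdots>p_L$ together with distinct values $f(\bar n_i)\neq f(\bar n_j)$ on the swapped pair), not to force the conclusion for every optimizer.
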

\begin{proof}
Suppose there exists $j<k$ for which $n_j<n_k$, then consider $n'$, constructed by having $n_j'=n_k$, $n_k'=n_j$ and the others remain the same, then we can get
\begin{align*}
\sum_{i=1}^L p_if(n'_i) & -\sum_{i=1}^L p_if(n_i) \\
&= p_jf(n_j')+p_kf(n_k')-p_jf(n_j)-p_kf(n_k) \\
&= p_jf(n_k)+p_kf(n_j)-p_jf(n_j)-p_kf(n_k)\\
&= (p_j-p_k)[f(n_k)-f(n_j)].
\end{align*}
Since $j<k$, then $p_j-p_k\ge 0$. Also, since $f$ is non-increasing in $n_i$ and $n_j< n_k$, then we can get that $f(n_j)\ge f(n_k)$, \ie $f(n_k)-f(n_j)\le 0$. Therefore, $f(n')-f(n)\le 0$, which means that after the exchange, the objective value will not become higher.
Then we can keep doing exchange until $n_1\ge n_2\ge \ldots \ge n_L$.
\end{proof}

As we can see, if \JGb{a} per cache constraint is used, the optimal allocation is the same for all caches. This will likely result in
inefficient memory usage and the total absence of some files from the caching system. In Section~\ref{sec:average} we introduce the
average capacity constraint which will help to mitigate these issues.

\subsection{Optimal solution for $N=1$} \label{ssec:n1opt}
Next, we consider the case that files consist of a single chunk, \ie $N=1$. This implies that we either store a file completely in each cache, or not at all, \ie $n_i$ can be either 0 or 1. If $n_i=1$, then file $i$ is stored in every cache.
In this case when a client requests file $i$, the miss probability will be
$$
\Pei=e^{-x}.
$$
If $n_i=0$, the miss probability will be 1 if it is requested. Therefore, we can see that
\begin{eqnarray}\label{mpi}
\Pei=\left\{
 \begin{array}{c@{,\quad}c}
 e^{-x} & {\rm if}\  n_i=1,\\
 1 & {\rm if}\  n_i=0\\
 \end{array}\right.
\end{eqnarray}
and we can write equation \eqref{mpi} as
\begin{eqnarray}
\Pei=e^{-n_i x}.
\end{eqnarray}

For the special case of $N=1$, the general optimization problem, Problem~\ref{pb:opt}, reduces to the following problem.
\begin{problem}\label{pb:discrete}
\begin{eqnarray*}
\nonumber {\rm min} & &\sum_{i=1}^L p_ie^{-n_i x}\\
\nonumber {\rm subject\  to} & &\sum_{i=1}^L n_i=C\\
 & & n_i\in\{0,1\},\ i=1,\dots,L.
\end{eqnarray*}
\end{problem}

Since $n_i$ is binary and we know that the optimal solution has a structure $n_1\ge n_2\ge... n_L$, it follows directly from Theorem~\ref{th:structure} that the optimal solution of Problem~\ref{pb:discrete} is as stated in the following result.
\begin{corollary}
The optimal solution of Problem \ref{pb:discrete} is $\bar n=(\bar n_1, \bar n_2,..., \bar n_L)$, where
\begin{eqnarray*}
\bar n_i=\left\{
\begin{array}{c@{, \ }c}
1 & {\rm if\ } i\le C,\\
0 & {\rm if\ } i>C.
\end{array}\right.
\end{eqnarray*}
\end{corollary}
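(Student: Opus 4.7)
The plan is to derive the corollary as an immediate consequence of Theorem~\ref{th:structure} together with the binary nature of the decision variables. Since Problem~\ref{pb:discrete} is just the specialization of Problem~\ref{pb:opt} to $N=1$ (where the constraint $n_i \in \mathbb{N}$ is strengthened to $n_i \in \{0,1\}$), the structural result applies verbatim: any optimizer $\bar n$ satisfies $\bar n_1 \ge \bar n_2 \ge \dots \ge \bar n_L$.

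From here I would simply observe that a non-increasing sequence whose entries lie in $\{0,1\}$ must consist of a block of $1$s followed by a block of $0$s. That is, there exists some index $k$ such that $\bar n_i = 1$ for $i \le k$ and $\bar n_i = 0$ for $i > k$. The capacity constraint $\sum_{i=1}^L \bar n_i = C$ then forces $k = C$, which yields exactly the stated form of $\bar n$.

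The only thing worth pausing on is uniqueness versus existence of the optimum: Theorem~\ref{th:structure} guarantees that \emph{some} optimal solution has non-increasing entries, but in the binary setting every feasible point with $\sum n_i = C$ and non-increasing entries is in fact the \emph{same} point, namely the one described in the corollary. So the argument gives that this particular allocation attains the minimum, which is all that is claimed. There is no real obstacle here; the work has already been done in establishing Theorem~\ref{th:structure}, and the corollary merely reads off the unique non-increasing $\{0,1\}$-vector with prescribed sum.
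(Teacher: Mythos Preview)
Your proposal is correct and matches the paper's own reasoning essentially verbatim: the paper simply notes that since $n_i$ is binary and Theorem~\ref{th:structure} forces the optimal solution to be non-increasing, the stated allocation follows immediately. Your additional remarks on the block structure and the capacity constraint fixing $k=C$ just spell out what the paper leaves implicit.
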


Note that contrary to the case $N>1$ the objective function of Problem~\ref{pb:discrete} is convex. We will make use of this property in Section~\ref{sec:average}, where we will revisit the case $N=1$ under an average capacity constraint.

%


\subsection{Dynamic Programming} \label{ssec:dynprog}

In this section we return to the general case of arbitrary $N$. As already discussed in Section~\ref{sec:introduction}, Problem~\ref{pb:opt} is a generalization of the unbounded knapsack problem. The generalization comes from the fact that the objective function is not a weighted sum of the variables $n_i$, but a non-convex function \JGb{in} these variables. It is well-known that the unbounded knapsack problem can be solved in pseudo-polynomial time using dynamic programming~\cite{martello1990knapsack}. In this section we demonstrate that Problem~\ref{pb:opt} can also be solved using dynamic programming.

In order to formulate a dynamic programming solution we interpret Problem~\ref{pb:opt} as follows. We have $C$ units in total, and there are $L$ slots to put the units in. Assigning $n_i$ units to slot $i$ induces a certain cost. Our goal is to distribute all of the $C$ units over these slots with a minimal total cost, which is defined as $\sum_{i=1}^L p_if(n_i)$, where $f$ is defined in~\eqref{eq:f}. The idea of dynamic programming is to assign the units one by one, leading to a recursive procedure in both $L$ and $C$.

More precisely, consider the problem
\begin{eqnarray}\label{dynaprob}
\nonumber {\rm min} & &\sum_{i=1}^\ell p_if(n_i),\\
{\rm subject\  to} & &\sum_{i=1}^\ell n_i=c,
\end{eqnarray}
and let $F(\ell,c)$ denote its optimal value. Our interest is in $F(L,C)$ and $\bar n$, a solution attaining $F(L,C)$. For $\ell=2,\dots,L$ and $c=0,\dots,C$ let
\begin{equation}
\tilde n_{\ell,c} = \argmin_{n\in \{0,...,c\wedge N\}} \left\{F(\ell-1,c-n)+p_{\ell}f(n)\right\},
\end{equation}
where $c\wedge N=\min\{c,N\}$. The dynamic programming approach to Problem~\ref{pb:opt} is based on the observation that for $2\leq\ell\leq L$ and $0\leq c\leq C$ we can express $F(\ell,c)$ as
\begin{equation}\label{dynamic}
\begin{aligned}
F(\ell,c) &= F(\ell-1,c-\tilde n_{\ell,c})+p_{\ell}f(\tilde n_{\ell,c}).  \\
\end{aligned}
\end{equation}
\begin{algorithm}[t]
\begin{algorithmic}
\FOR{$c=0:C$}
\STATE $F(1,c) = \begin{cases} p_1f(c),\quad &\text{if }c\leq N,\\ p_1f(N),\quad &\text{if }c> N.\end{cases}$
\ENDFOR
\FOR{$\ell=2:L$}
\FOR{$c=0:C$}
\STATE $\tilde n_{\ell,c}\!=\!\argmin_{n\in \{0,\dots,N\wedge c\}} \!\!\left\{F(\ell-1,c-n)+p_{\ell}f(n)\right\}\!,$
\STATE $F(\ell,c) = F(\ell-1,c-\tilde n_{\ell,c})+p_{\ell}f(\tilde n_{\ell,c})$.
\ENDFOR
\ENDFOR
\STATE $c=C$
\FOR{$\ell=L:-1:2$}
\STATE $\bar n_\ell = \tilde n_{\ell,c},$
\STATE $c=c-\bar n_\ell.$
\ENDFOR
\STATE $\bar n_1 = c,$
\STATE $P_e = F(L,C)$.
\end{algorithmic}
\caption{Dynamic Programming Algorithm for Problem~\ref{pb:opt}}
\label{alg:dynprog}
\end{algorithm}
%
%
%
%
The procedure is initialized by considering $\ell=1$ and $0\leq c\leq C$, for which we know that the optimal value $F(1,c)=p_1f(c\wedge N)$. Next we apply formula (\ref{dynamic}) iteratively.  After computing all values $F(\ell,c)$, the solution $\bar n$ can be constructed from the values of $\tilde n_{\ell,c}$ by tracking backwards starting at $\ell=L$. The complete procedure is presented as Algorithm~\ref{alg:dynprog}. The following theorem provides a formal statement of the result. The proof follows from standard results on dynamic programming.
\begin{theorem}
Algorithm~\ref{alg:dynprog} provides a globally optimal solution to Problem~\ref{pb:opt} in pseudo-polynomial time.
\end{theorem}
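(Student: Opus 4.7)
The plan is to verify the two assertions separately: first that Algorithm~\ref{alg:dynprog} produces a globally optimal solution, and second that its running time is pseudo-polynomial in the input parameters.

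For optimality, I would proceed via the standard principle of optimality for separable cost functions. First, I would observe that in Problem~\ref{pb:opt} it is never strictly useful to assign more than $N$ linear combinations of any file to a single cache: since $f(n_i)=Q(\lceil N/n_i\rceil,x)$ satisfies $f(n_i)=Q(1,x)$ for every $n_i\geq N$, any allocation with some $n_i>N$ can be replaced by $n_i'=N$ and the excess budget reassigned to another file without increasing the objective. Hence restricting the search to $n_\ell\in\{0,\dots,c\wedge N\}$ in the recursion loses no optimal solutions. Second, because the objective in~\eqref{dynaprob} is a sum $\sum_{i=1}^\ell p_i f(n_i)$ and the only coupling between variables is the additive budget constraint $\sum_{i=1}^\ell n_i=c$, the value $F(\ell,c)$ obeys the Bellman recursion
\begin{equation*}
F(\ell,c)=\min_{n\in\{0,\dots,c\wedge N\}}\bigl\{F(\ell-1,c-n)+p_\ell f(n)\bigr\},
\end{equation*}
which is precisely the update implemented by the algorithm.

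Armed with this recursion, I would prove correctness by induction on $\ell$. The base case $\ell=1$ is the initialization step: the only feasible allocation for a single file with budget $c$ is $n_1=c\wedge N$ (again using the monotonicity of $f$ to truncate at $N$), giving $F(1,c)=p_1 f(c\wedge N)$. For the inductive step, assuming the table $F(\ell-1,\cdot)$ is correct, the Bellman recursion immediately gives that the computed $F(\ell,c)$ equals the true optimum. Finally, I would verify that the backward traceback yields an optimal $\bar n$: starting from $c=C$ and $\ell=L$, the choice $\bar n_\ell=\tilde n_{\ell,c}$ is by construction a minimizer in the Bellman equation, so the residual subproblem at $(\ell-1,c-\bar n_\ell)$ is itself solved optimally by induction. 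This gives $P_e=F(L,C)=\sum_{i=1}^L p_i f(\bar n_i)$.

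For complexity, I would simply count the work. The table $F(\ell,c)$ has $L(C+1)$ entries, and each entry requires a minimum over at most $N+1$ candidate values of $n$, each of which takes $O(1)$ time given the previously computed row (the function $f$ can be precomputed on $\{0,\dots,N\}$ in $O(N)$ time using a numerical routine for the regularized incomplete Gamma function, which I would treat as an oracle call of constant cost). Thus the total running time is $O(LCN)$, which is polynomial in $L$ and in the numerical values of $C$ and $N$ but exponential in the bit-length of $C$; this is the defining property of a pseudo-polynomial algorithm. The backward traceback adds only $O(L)$ work.

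The main obstacle, and the only step requiring genuine care rather than invoking a standard DP template, is justifying the truncation $n\in\{0,\dots,c\wedge N\}$ in the minimization; without this observation the inner loop would not be bounded by $N$ and the pseudo-polynomial complexity claim would be weaker. Everything else reduces to the textbook verification that a separable objective with an additive resource constraint admits an exact dynamic programming solution, so I would keep that part brief and cite standard references such as~\cite{martello1990knapsack}.
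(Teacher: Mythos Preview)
Your proposal is correct and follows exactly the approach the paper has in mind: the paper's own proof is a single sentence, ``The proof follows from standard results on dynamic programming,'' so your write-up simply supplies the standard Bellman-recursion induction and operation count that the authors chose to omit. The only point worth tightening is the base case: with the equality constraint the unique feasible choice is $n_1=c$, not $n_1=c\wedge N$; the stated initialization $F(1,c)=p_1 f(c\wedge N)$ is nonetheless correct because $f(c)=f(N)$ for all $c\geq N$, and you should phrase it that way rather than as a feasibility statement.
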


In Section~\ref{sec:numerical} we will provide additional insight into the optimal solution of Problem~\ref{pb:opt}.

%
%
%
\section{Average Capacity Constraints}
\label{sec:average}

%
%

In this section, instead of imposing an individual per cache constraint on each of the devices, we require that the average storage space used in the devices is upper bounded. We analyze the resulting optimal strategy for the case that files consist of a single chunk ($N=1$) and show that the performance under an average capacity constraint significantly \JGb{outperforms} the optimal performance under a per cache capacity constraint.

Since files consist of a single chunk, the choice to make is whether to store the complete file or not to store the file at all.
The proposed strategy places file $i$ in a cache with probability $q_i$. Placement of files is independent between caches. By the independence of the placement over the caches and the thinning property of the Poisson process \cite{baccelli1}, 
it follows that those caches that contain file $i$ are again distributed according to a spatial Poisson process, this time with density $q_i\lambda$. Therefore, the probability that file $i$ cannot be retrieved from the caches within distance $r$ is
\begin{equation}
\Pei = p_ie^{-q_i x},
\end{equation}
with $x=\lambda \pi r^2$.

Now the goal is to optimize $\sum_{i=1}^L \Pei$ subject to the capacity constraint. This leads to the following optimization problem.
\begin{problem}\label{pb:relax}
\begin{eqnarray*}
{\rm min} & & \sum_{i=1}^L p_ie^{-q_i x}\\
{\rm subject\  to} & & \sum_{i=1}^L q_i= C,\\
 & & 0\le q_i\le 1,\ i=1,\dots,L.
\end{eqnarray*}
\end{problem}
Note that contrary to the objective function of Problem~\ref{pb:opt}, the above objective function is convex. 
Also note that in contrast to Problem~2 the variables in Problem~\ref{pb:relax} are continuous. 
Therefore, Problem~\ref{pb:relax} is a convex optimization problem.

\subsection{Optimal solution}
Since Problem~\ref{pb:relax} is convex, the Karush-Kuhn-Tucker (KKT) conditions provide necessary and sufficient conditions for optimality. We will construct an explicit analytical solution to Problem~\ref{pb:relax} that satisfies the KKT conditions.

The Lagrangian function corresponding to Problem~\ref{pb:relax} is
\begin{multline}
L(q,\nu,\lambda,\omega)=\sum_{i=1}^Lp_ie^{-q_i x}+\nu(\sum_{i=1}^L q_i -C) \\
-\sum_{i=1}^L \lambda_i q_i+\sum_{i=1}^L \omega_i (q_i-1),
\end{multline}
where $q, \lambda,\omega \in \mathbb{R}_{+}^L, \nu\in\mathbb{R}$.

Let $\bar q, \bar{\lambda}, \bar{\omega}$ and $\bar{\nu}$ be primal and dual optimal.
Then the KKT conditions for Problem~\ref{pb:relax} state that
\begin{eqnarray}
\label{eq1} 0\le \bar{q_i}&\le& 1,\\
\label{eq2} \sum_{i=1}^L \bar{q_i}&=&C,\\
\label{eq3} \bar{\lambda_i}&\ge& 0,\ \forall i=1,...,L,\\
\label{eq4} \bar{\omega_i}&\ge& 0,\ \forall i=1,...,L,\\
\label{eq5} \bar{\lambda_i}\bar{q_i}&=&0,\ \forall i=1,...,L,\\
\label{eq6} \bar{\omega_i}(\bar{q_i}-1)&=&0,\ \forall i=1,...,L,\\
\label{eq7} -p_ixe^{-\bar{q_i}x}+\bar{\nu}-\bar{\lambda_i}+\bar{\omega_i}&=&0,\ \forall i=1,...,L.
\end{eqnarray}
For notational convenience we introduce the functions $g_i:\mathbb{R}\to[0,1]$, $i=1,\dots,L,$ as follows
\begin{eqnarray}\label{eq:Ndef}
g_i(\nu)=\left\{
\begin{array}{r@{,\quad}l}
\displaystyle 1 & {\rm if\ } \nu\le p_ixe^{-x},\\
\displaystyle \frac{1}{x}\log\frac{p_ix}{\nu} & {\rm if\ } p_ixe^{-x}<\nu<p_ix, \\
\displaystyle 0 & {\rm if\ } \nu\ge p_ix.
\end{array}\right.
\end{eqnarray}
Furthermore, let $g:\mathbb{R}\to[0,L]$ be defined as $g(\nu)=\sum_{i=1}^L g_i(\nu)$. Observe that $g(\nu)=L$ for $\nu\in (-\infty, p_Lx e^{-x}]$, that $g(\nu)=0$ for $\nu\in [p_1x, \infty)$ and that it is strictly decreasing in the interval $(p_Lx e^{-x}, p_1x)$.

\begin{lemma} \label{lem:structure}
Let $\bar q$ and $\bar \nu$ be optimal. Then $\bar q = (g_1(\bar\nu),\dots,g_L(\bar\nu))$.
\end{lemma}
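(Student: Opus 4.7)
The plan is to read off $\bar q_i$ from the KKT system \eqref{eq1}--\eqref{eq7}, treating each coordinate $i$ independently and splitting according to whether $\bar q_i$ lies at the lower boundary $0$, in the interior $(0,1)$, or at the upper boundary $1$. Since Problem~\ref{pb:relax} is convex and differentiable with affine constraints, the KKT conditions are necessary and sufficient, so it suffices to check that the piecewise formula $\bar q_i = g_i(\bar\nu)$ is the unique value of $\bar q_i$ consistent with \eqref{eq1}--\eqref{eq7} for the given $\bar\nu$.

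First I would handle the interior case. If $0<\bar q_i<1$, then \eqref{eq5}--\eqref{eq6} force $\bar\lambda_i=\bar\omega_i=0$, and the stationarity condition \eqref{eq7} reduces to $p_i x e^{-\bar q_i x}=\bar\nu$, so $\bar q_i=\tfrac{1}{x}\log(p_i x/\bar\nu)$. The constraint $0<\bar q_i<1$ then translates exactly into $p_i x e^{-x}<\bar\nu<p_i x$, matching the middle branch of \eqref{eq:Ndef}.

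Next, the boundary cases. If $\bar q_i=0$, then \eqref{eq6} gives $\bar\omega_i=0$, and \eqref{eq7} becomes $\bar\lambda_i=\bar\nu-p_i x$; the sign condition \eqref{eq3} yields $\bar\nu\ge p_i x$, matching the bottom branch of \eqref{eq:Ndef}. Symmetrically, if $\bar q_i=1$, then \eqref{eq5} gives $\bar\lambda_i=0$, and \eqref{eq7} gives $\bar\omega_i=p_i x e^{-x}-\bar\nu\ge 0$ by \eqref{eq4}, i.e.\ $\bar\nu\le p_i x e^{-x}$, matching the top branch of \eqref{eq:Ndef}. Putting the three cases together gives $\bar q_i=g_i(\bar\nu)$ for every $i$.

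The only subtle point is to check that the three cases cover the boundary values $\bar\nu=p_i x e^{-x}$ and $\bar\nu=p_i x$ consistently: at these values two branches of $g_i$ agree (at $1$ and $0$ respectively), so the assignment $\bar q_i = g_i(\bar\nu)$ is still well defined and still satisfies all KKT relations, because the multipliers $\bar\lambda_i$ or $\bar\omega_i$ simply vanish. This is a minor bookkeeping step rather than a real obstacle; the main content of the proof is the case analysis above.
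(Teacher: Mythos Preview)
Your proof is correct and follows essentially the same KKT case analysis as the paper. The only cosmetic difference is that the paper conditions directly on the value of $\bar\nu$ (after first deriving the product identity $\bar q_i(\bar q_i-1)(p_i x e^{-\bar q_i x}-\bar\nu)=0$) to read off $\bar q_i$, whereas you condition on the location of $\bar q_i$ in $\{0\}\cup(0,1)\cup\{1\}$ and then invert; both arrive at the same three-branch formula.
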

\begin{proof}
Let $i\in\{1,\dots,L\}$. From~\eqref{eq5}, \eqref{eq6} and \eqref{eq7}, we have
\begin{equation}
\bar{\omega_i}=\bar{q_i}(p_ixe^{-\bar{q_i}x}-\bar{\nu}),
\end{equation}
which, when inserted into \eqref{eq6}, gives
\begin{equation}\label{main}
\bar{q_i}(\bar{q_i}-1)(p_ixe^{-\bar{q_i}x}-\bar{\nu})=0.
\end{equation}
From~\eqref{main}, we see that $0<\bar{q_i}<1$ only if $\bar{\nu}=p_ixe^{-\bar{q_i}x}$. Since $0\leq q_i\leq 1$, this implies that $\nu\in[p_ixe^{-x}, p_ix]$.

If $\bar{\nu}< p_ixe^{-x}$, then
$$
\bar{\omega_i}=\bar{\lambda_i}+p_ixe^{-\bar{q_i}x}-\bar{\nu}>0.
$$
Thus, from \eqref{eq6}, we have $q_i=1$. Similarly, if $\bar{\nu}> p_ix$, it follows from
$$
\bar{\lambda_i}=\bar{\omega_i}+\bar{\nu}-p_ixe^{-\bar{q_i}x}>0
$$
and~\eqref{eq5} that $\bar{q_i}=0$.
\end{proof}

It remains to solve for $\bar\nu$. The complete solution is provided by the next theorem.


\begin{theorem} \label{th:avg}
The optimal solution of Problem~\ref{pb:relax} is given by
\begin{eqnarray*}
\bar{q_i}=\left\{
\begin{array}{c@{,\quad}l}
\displaystyle 1 & {\rm if\ } i<k_1,\\
\displaystyle \frac{1}{x}\log\frac{p_ix}{\bar\nu} & {\rm if\ } k_1\le i\le k_2,\\
\displaystyle 0 & {\rm if\ } i>k_2
\end{array}\right.
\end{eqnarray*}
where $k_1, k_2$ are given by
\begin{gather}
k_1 = \min\big\{ 1\leq\ell\leq L \big|\ g(p_\ell xe^{-x})\geq C \big\}, \label{eq:k1}\\
k_2 = \max\big\{ 1\leq\ell\leq L \big|\ g(p_\ell x)\leq C \big\} \label{eq:k2}
\end{gather}
and
\begin{equation}\label{nu}
\bar{\nu}=\exp \left\{\frac{1}{k_2-k_1+1} \sum_{j=k_1}^{k_2} \log p_jx-\frac{x(C-k_1+1)}{k_2-k_1+1}\right\}.
\end{equation}
\end{theorem}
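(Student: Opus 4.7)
The plan is to build on Lemma~\ref{lem:structure}, which already gives $\bar q_i = g_i(\bar\nu)$, so what remains is to identify $\bar\nu$ and confirm that the three regimes of $g_i$ line up with the three cases in the statement. A first step is to reduce the equality constraint $\sum_{i=1}^L \bar q_i = C$ to the scalar equation $g(\bar\nu) = C$. Each $g_i$ is continuous (its piecewise pieces agree at the break points $p_ixe^{-x}$ and $p_ix$) and non-increasing, so $g$ is continuous and non-increasing, with $g(\nu)=L$ for $\nu \le p_Lxe^{-x}$ and $g(\nu)=0$ for $\nu \ge p_1x$. Since $N=1$ and $C<LN=L$, the intermediate-value theorem supplies at least one $\bar\nu$ with $g(\bar\nu)=C$.

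Next I would exploit the ordering $p_1 \ge \dots \ge p_L$ to show that $\{i : g_i(\bar\nu)=1\}$ is a prefix and $\{i : g_i(\bar\nu)=0\}$ is a suffix of $\{1,\dots,L\}$: $g_i(\bar\nu)=1$ iff $\bar\nu \le p_ixe^{-x}$, and the right-hand side is non-increasing in $i$, so such $i$ form a set $\{1,\dots,k_1-1\}$; an analogous argument using $\bar\nu \ge p_ix$ produces an index $k_2 \ge k_1$. To match these with the definitions~\eqref{eq:k1}--\eqref{eq:k2}, I would apply the monotonicity of $g$: since $g_{k_1}(\bar\nu)<1$ forces $\bar\nu > p_{k_1}xe^{-x}$, one has $g(p_{k_1}xe^{-x}) \ge g(\bar\nu)=C$, and for $\ell<k_1$ one has $\bar\nu < p_\ell xe^{-x}$ giving $g(p_\ell xe^{-x}) < C$, which together match the minimality property in~\eqref{eq:k1}. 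The argument for $k_2$ is symmetric, using the thresholds $p_ix$.

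Once $k_1$ and $k_2$ are in hand, formula~\eqref{nu} follows by a short calculation: substituting $g_i(\bar\nu)=1$ for $i<k_1$, $g_i(\bar\nu)=0$ for $i>k_2$, and $g_i(\bar\nu)=\frac{1}{x}\log\frac{p_ix}{\bar\nu}$ for $k_1\le i\le k_2$ into $g(\bar\nu)=C$ gives
\begin{equation*}
(k_1-1) + \frac{1}{x}\sum_{j=k_1}^{k_2}\log\frac{p_jx}{\bar\nu} = C,
\end{equation*}
which rearranges directly to the stated expression for $\bar\nu$. The main obstacle I expect is the careful handling of degenerate configurations: $g$ may be flat on subintervals where no $g_i$ is in its interior regime, so if $C$ happens to land on such a plateau then $\bar\nu$ need not be uniquely determined by $g(\bar\nu)=C$; and when $\bar\nu$ coincides with a break point $p_ixe^{-x}$ or $p_ix$, one has to verify that the piecewise assignment remains consistent, which it does because at those points the interior log formula evaluates exactly to the capped value $1$ or $0$. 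Once these degeneracies are dispatched, the claim of Theorem~\ref{th:avg} follows.
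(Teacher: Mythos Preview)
Your proposal is correct and follows essentially the same route as the paper's proof: both start from Lemma~\ref{lem:structure}, use the ordering $p_1\ge\dots\ge p_L$ to obtain the prefix/suffix structure, translate the characterizations of $k_1,k_2$ into the conditions~\eqref{eq:k1}--\eqref{eq:k2} via the monotonicity of $g$ together with $g(\bar\nu)=C$, and then solve the resulting linear equation in $\log\bar\nu$ for~\eqref{nu}. The only cosmetic difference is that you argue existence of $\bar\nu$ via continuity and the intermediate-value theorem, whereas the paper simply takes $\bar\nu$ to be the optimal KKT multiplier; your added discussion of degenerate break-point cases is a welcome refinement that the paper leaves implicit.
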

\begin{proof}
From Lemma~\ref{lem:structure} it follows that there exist $k_1, k_2\in [1,L]$ such that $\bar q_1=\bar q_2=\dots=\bar q_{k_1-1}=1$ and $\bar q_{k_2+1}=\bar q_{k_2+2}=\dots=\bar q_L=0$. In particular, $k_1$ is given by
\begin{equation} \label{eq:k1prelim}
k_1 = \min\big\{ 1\leq\ell\leq L \big|\ \bar v>p_\ell xe^{-x} \big\}.
\end{equation}
Note that the above minimum is guaranteed to exist, because otherwise $q_i=1$ for all $i=1,\dots,L$, leading to a contradiction on the assumption that $\sum_{i=1}^L q_i=C<L$. Condition~\eqref{eq:k1} is obtained by applying the non-increasing function $g$ to the LHS and the RHS in the constraint in~\eqref{eq:k1prelim} and by observing that  from Lemma~\ref{lem:structure} and~\eqref{eq2} it follows that $g(\bar\nu)=C$.

Condition~\eqref{eq:k2} follows in similar lines by starting from
\begin{equation}
k_2 = \max\big\{ 1\leq\ell\leq L \big|\ \bar v< p_\ell x\big\}.
\end{equation}
This maximum exists, because otherwise $\sum_{i=1}^L q_i=0$, which contradicts~\eqref{eq2}.
Finally, the proof of the lemma is completed by solving for $\bar\nu$ in $g(\bar\nu)=C$.
\end{proof}

We note that in contrast to the solution of Problem~\ref{pb:opt} the solution in the case of the average capacity
constraint admits \JGb{a} probabilistic policy of file placement. This should improve the system efficiency as well as
file diversity. We will further illustrate the results of Theorem~\ref{th:avg} in Section~\ref{sec:numerical}.

%
%
%
\section{Dynamic Setting}
\label{sec:dynamic}
In this section we consider a dynamic scenario in which clients arrive over time. More precisely, we consider $L>1$ files, each of $N=1$ packets.  Clients arrive over time. We assume that at any time there is at most one client, \ie we assume that the request of a user is completely handled and that the caches are updated before the next client arrives. This assumption is just for
simplicity of modelling and could be safely neglected in \JGb{a} real implementation. The files that are requested by users are selected at random according to a Zipf distribution, independently across users. Clients arrive to random locations in the plane. As in the other parts of this paper, a client can connect to all caches that are within range $r$ of the client.

The caching policy is as follows. If the file requested by the user is present in any of the caches that are within its range, the file is delivered to the client from the cache. If the requested file is not present in the cache it is fetched from the server and delivered to the user. In addition, the file is then placed in the cache that is closest to the user.

Each cache individually follows the Least Recently Used (LRU) policy for caching files. This means that each cache keeps an ordered list of the files that are locally cached. If a file is served to a client it is moved to the head of the list. If a file was fetched from the server it will be placed at the head of the list in the cache that is closest to the user that is requesting that file. If the number of files in the list is exceeding the cache capacity the file at the tail of the list is dropped from the cache.

Note that in the caching policy as described above there is no cooperation between caches. It is a straightforward extension of the LRU policy to a network of caches. In particular, the caching policy does not make use of information about the file popularity. In this section we are interested in comparing the performance of this very simple policy with the optimal allocation strategy of Section~\ref{sec:average}. In order to do so we have simulated the LRU policy as described above. The numerical results are presented in Section~\ref{sec:numerical}.

In addition to the comparison with the optimal allocation strategy we consider the performance of an LRU policy in which the caches fully cooperate. More precisely, we analyze the performance of a single LRU cache with a capacity that equals the expected sum capacity of all caches that are within range of a client. This allows us to evaluate the `penalty to pay' for distributing cache capacity over several caches that operate independently. The expected number of caches that are within range of a client is $\lambda\pi r^2$. The expected sum capacity is therefore $C \lambda\pi r^2$. The cache miss probability of a single LRU cache is known to be accurately approximated with the Che approximation~\cite{che2002hierarchical,fricker2012versatile}. We provide the numerical evaluation of this approximation for a cache of capacity $C\lambda\pi r^2$ in Section~\ref{sec:numerical}.

%
%
%
\section{Numerical Evaluation}
\label{sec:numerical}

%
%

In this section we present a numerical evalution of the results obtained in the previous sections of this paper. In particular, we consider the case of file popularities following a Zipf distribution, \ie $p_i^z=i^{-s}/\sum_{k=1}^Lk^{-s}$, with parameter $s$. The numerical illustrations will provide some additional insights into the behavior of the optimal cache allocation policies as well as into the behavior of the proposed LRU strategies. In particular, we compare cooperative and fully distributed LRU-based
caching policies.

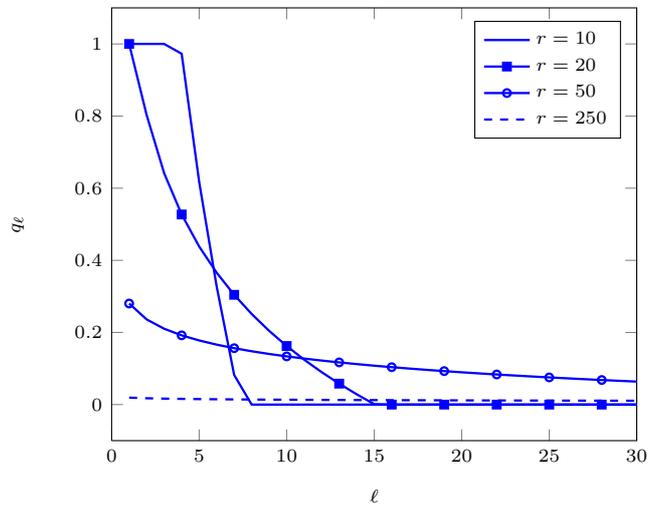
\begin{figure}
\centering
\begin{tikzpicture}
\begin{axis}[
  width=.7\linewidth,
  xlabel=$\ell$,ylabel=$q_\ell$, 
 xmin=0, xmax=30,
  font=\scriptsize,
  legend style={
        cells={anchor=west},
        legend pos=north east,
       font=\scriptsize,
    }
]

\addplot[
  line width=.3mm,color=blue,
  mark=none
  ]
table[
  header=false,x index=0,y index=1,
  ]
{matlab_figures/soft1.csv};
\addlegendentry{$r=10$};

\addplot[
  line width=.3mm,color=blue,
  mark=square*,mark repeat=3,mark phase=0,mark size=.5mm,mark options={solid}
  ]
table[
  header=false,x index=0,y index=2,
  ]
{matlab_figures/soft1.csv};
\addlegendentry{$r=20$};

\addplot[
  line width=.3mm,color=blue,
  mark=o,mark repeat=3,mark phase=0,mark size=.5mm
  ]
table[
  header=false,x index=0,y index=3,
  ]
{matlab_figures/soft1.csv};
\addlegendentry{$r=50$};

\addplot[
  line width=.3mm,color=blue,dashed,
  mark=none
  ]
table[
  header=false,x index=0,y index=4,
  ]
{matlab_figures/soft1.csv};
\addlegendentry{$r=250$};

\end{axis}
\end{tikzpicture}
\caption{Optimal allocation probabilities under an average capacity constraint. ($L=2000$, $N=1$, $C=5$, $\lambda=2\cdot 10^{-3}$, $s=1$)}
\label{fig:avgzipf2a}
\end{figure}

\begin{figure}
\centering
\begin{tikzpicture}
\begin{axis}[
  width=.7\linewidth,
  xlabel=$\ell$,ylabel=$q_\ell$, 
xmin=0, xmax=30,
  font=\scriptsize,
  legend style={
        cells={anchor=west},
        legend pos=north east,
       font=\scriptsize,
    }
]

\addplot[
  line width=.3mm,color=blue,
  mark=none
  ]
table[
  header=false,x index=0,y index=1,
  ]
{matlab_figures/soft2.csv};
\addlegendentry{$r=10$};

\addplot[
  line width=.3mm,color=blue,
  mark=square*,mark repeat=3,mark phase=0,mark size=.5mm,mark options={solid}
  ]
table[
  header=false,x index=0,y index=2,
  ]
{matlab_figures/soft2.csv};
\addlegendentry{$r=20$};

\addplot[
  line width=.3mm,color=blue,
  mark=o,mark repeat=3,mark phase=0,mark size=.5mm
  ]
table[
  header=false,x index=0,y index=3,
  ]
{matlab_figures/soft2.csv};
\addlegendentry{$r=50$};

\addplot[
  line width=.3mm,color=blue,dashed,
  mark=none
  ]
table[
  header=false,x index=0,y index=4,
  ]
{matlab_figures/soft2.csv};
\addlegendentry{$r=250$};

\end{axis}
\end{tikzpicture}
\caption{Optimal allocation probabilities under an average capacity constraint. ($L=2000$, $N=1$, $C=10$, $\lambda=2\cdot 10^{-3}$, $s=1$)}
\label{fig:avgzipf2b}
\end{figure}

\begin{figure}
\centering
\begin{tikzpicture}
\begin{axis}[
  width=.7\linewidth,
  xlabel=$r$,ylabel=$P_e$, 
 ymin = 0, ymax=1,
  font=\scriptsize,
  legend style={
        cells={anchor=west},
        legend pos=south west,
       font=\scriptsize,
    }
]

\addplot[
  line width=.3mm,color=blue,
  mark=square*,mark repeat=3,mark phase=1,mark size=.5mm,mark options={solid}
  ]
table[
  header=false,x index=0,y index=1,
  ]
{matlab_figures/hardsoftL2000C10.csv};
\addlegendentry{$C=10$, individual};

%
%

\addplot[
  line width=.3mm,color=blue,dashed,
  mark=square*,mark repeat=3,mark phase=2,mark size=.5mm,mark options={solid}
  ]
table[
  header=false,x index=0,y index=1,
  ]
{matlab_figures/hardsoftL2000C50.csv};
\addlegendentry{$C=50$, individual};

\addplot[
  mark=none,line width=.3mm,color=red,
  mark=none
  ]
table[
  header=false,x index=0,y index=2,
  ]
{matlab_figures/hardsoftL2000C10.csv};
\addlegendentry{$C=10$, average};

%
%
%
%
%

\addplot[
  mark=none,line width=.3mm,color=red,dashed,
  mark=none
  ]
table[
  header=false,x index=0,y index=2,
  ]
{matlab_figures/hardsoftL2000C50.csv};
\addlegendentry{$C=50$, average};

\end{axis}
\end{tikzpicture}
\caption{Cache miss probability under individual and average cache capacity constraints. ($L=2000, N=1, \lambda=2\cdot 10^{-3}, s=1$)}
\label{fig:newa}
\end{figure}
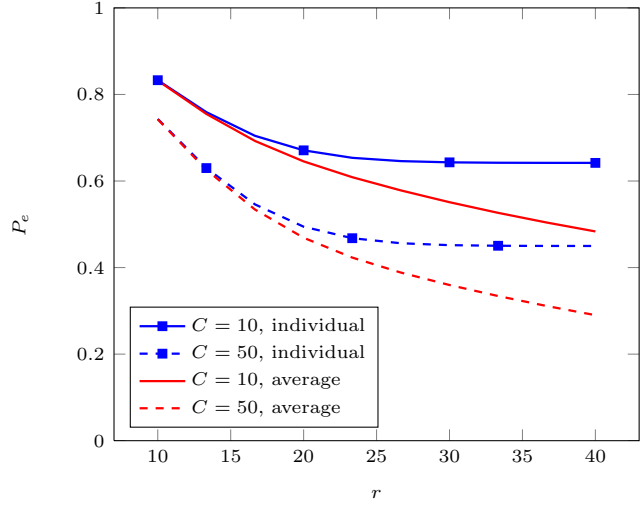

\begin{figure}
\centering
\begin{tikzpicture}
\begin{axis}[
  width=.7\linewidth,
  xlabel=$i$,ylabel=$n_i$, 
 xmin=0, xmax=20,
  font=\scriptsize,
  legend style={
        cells={anchor=west},
        legend pos=north east,
       font=\scriptsize,
    }
]

\addplot[
  only marks,color=red,
  mark=x,mark size=1mm
  ]
table[
  header=false,x index=0,y index=1,
  ]
{matlab_figures/hard_in_lambda_L20C150N50.csv};
\addlegendentry{$\lambda=1.8\cdot 10^{-3}$};

\addplot[
 only marks,color=blue,
  mark=square,mark repeat=1,mark phase=0,mark size=.5mm,
  ]
table[
  header=false,x index=0,y index=2,
  ]
{matlab_figures/hard_in_lambda_L20C150N50.csv};
\addlegendentry{$\lambda=2\cdot 10^{-3}$};

\addplot[
 only marks,color=green!50!black,
  mark=o,mark repeat=1,mark phase=0,mark size=1mm
  ]
table[
  header=false,x index=0,y index=3,
  ]
{matlab_figures/hard_in_lambda_L20C150N50.csv};
\addlegendentry{$\lambda=2.2\cdot 10^{-3}$};

\end{axis}
\end{tikzpicture}
\caption{Influence of \JG{small differences in} $\lambda$ on storage policy under individual capacity constraints ($L=20$, $N=50$, $C=150$, $s=1$, $r=50$)}
\label{fig:varying_lambda}
\end{figure}

\begin{figure}
\centering
\begin{tikzpicture}
\begin{axis}[
  width=.7\linewidth,
  xlabel=$i$,ylabel=$n_i$, 
 xmin=0, xmax=20,
  font=\scriptsize,
  legend style={
        cells={anchor=west},
        legend pos=north east,
       font=\scriptsize,
    }
]

\addplot[
  only marks,color=red,
  mark=x,mark size=1mm
  ]
table[
  header=false,x index=0,y index=1,
  ]
{matlab_figures/hard_in_lambda_additional.csv};
\addlegendentry{$\lambda=0.5\cdot 10^{-3}$};

\addplot[
 only marks,color=blue,
  mark=square,mark repeat=1,mark phase=0,mark size=.5mm,
  ]
table[
  header=false,x index=0,y index=2,
  ]
{matlab_figures/hard_in_lambda_additional.csv};
\addlegendentry{$\lambda=2\cdot 10^{-3}$};

\addplot[
 only marks,color=green!50!black,
  mark=o,mark repeat=1,mark phase=0,mark size=1mm
  ]
table[
  header=false,x index=0,y index=3,
  ]
{matlab_figures/hard_in_lambda_additional.csv};
\addlegendentry{$\lambda=5\cdot 10^{-3}$};

\end{axis}
\end{tikzpicture}
\caption{
Influence of \JG{large differences in}  $\lambda$ on storage policy under individual capacity constraints ($L=20$, $N=50$, $C=150$, $s=1$, $r=50$)}
\label{fig:varying_lambda_b}
\end{figure}

\subsection{The optimal solution under an average capacity constraint}
Theorem~\ref{th:avg} provides an analytical expression for the optimal allocation probabilities under an average capacity constraint. It is not immediately clear from Theorem~\ref{th:avg} how $k_1$ and $k_2$ depend on, for instance, $x=\lambda\pi r^2$.
In Figures~\ref{fig:avgzipf2a} and~\ref{fig:avgzipf2b} we have illustrated the optimal allocation probabilities $\bar q_i$ under an average capacity constraint for various values of $r$. We observe that if $r$ is large, which means that the client can reach more caches within the range, then we store all files with equal probability $C/L$. It is intuitively clear that this minimizes the cache miss probability, since now all files can be retrieved with high probability. If $r$ is small and less caches can be reached, we will put priority, \ie higher $\bar q_i$, on the more popular files.

\subsection{Performance under individual and average cache capacity constraints}
Next we compare the miss probability of the optimal cache allocation under the individual cache capacity constraints with the miss probability under the average constraint. In Figure~\ref{fig:newa} we have depicted the cache miss probability as a function of $r$ for two values of $C$. From the discussion it should be clear that in the limit of large $r$ the cache miss probability under an average capacity constraint should approach zero. This is indeed reflected in Figure~\ref{fig:newa}. The individual capacity constraint, in stark contrast, results in a significant cache miss probability even at large $r$. The reason is that some files will not be stored at all and, therefore, a request for these files will always result in a cache miss.

Another interpretation of the significant improvement that is offered by allowing an average constraint, which means that whether the file is in the cache or not is probabilistic, is that different caches may have different files and that can help improve the performance and file diversity.

\subsection{Non-homogeneous distribution of base stations}

Here we argue that if the density of base stations $\lambda$ does not change very
rapidly, our analysis remains applicable but of course approximate.
Figure 3 gives performance as a function of $r$. What is important is that $x=\lambda\pi r^2$ is the only factor of influence. Therefore, our Figure 3 already gives some insight in the behavior as a function of $\lambda$. The storage policy under hard constraints is not influenced by the value of $x$ if $N=1$;
we simply store the most popular files. For $N>1$ we evaluate our dynamic programming policy. Figure~\ref{fig:varying_lambda} demonstrates the influence of $\lambda$ on the storage policy. The figure provides the number of fragments stored $n_i$ for file $i$ for various values of $\lambda$. The figure demonstrates that the policy is not changing much by small perturbations of $\lambda$. Hence, if the
density of base stations does not change much in space one can use a single policy everywhere without
much damage to the system performance. If the density of base stations changes significantly but not
too rapidly, as mentioned above, we expect that our results are still practically applicable.
Of course, as we demonstrate in the following \JG{Figure~\ref{fig:varying_lambda_b}}, the optimal policies for different densities of
base station distribution can be quite different.

\JGb{
\begin{figure}
\begin{center}
\includegraphics[width=.7\linewidth]{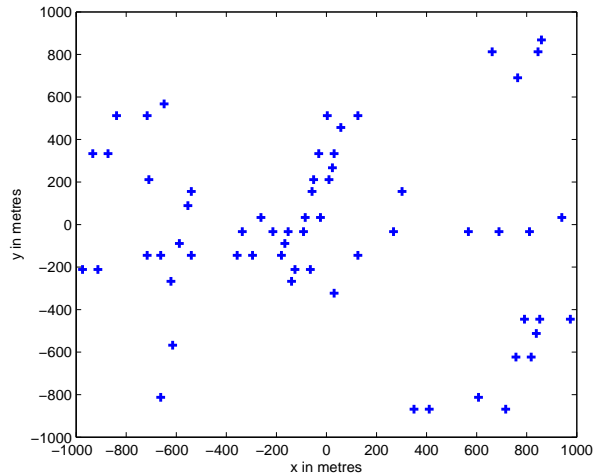}
\caption{Location of Base Stations from OpenMobileNetwork dataset.}
\label{fig:realdata}
\end{center}
\end{figure}
}

\JGb{
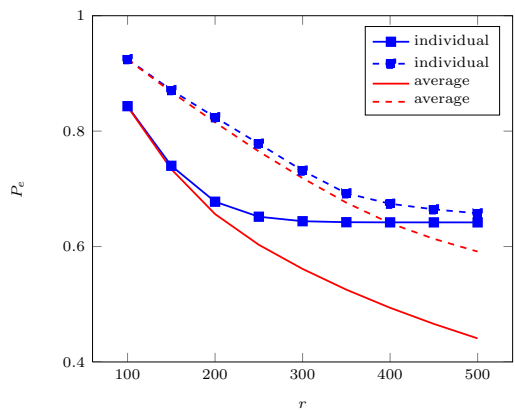
\begin{figure} 
\centering
\begin{tikzpicture}[scale=0.8]
\begin{axis}[
  width=.7\linewidth,
xlabel=$r$, ylabel=$P_e$, 
ymin = 0.4, ymax = 1,
font=\scriptsize,
legend style={
	cells={anchor=west},
	legend pos=north east,
	font=\scriptsize,
}
]

\addplot[
line width=.3mm,color=blue,
mark=square*
]
table[
header=false,x index=0, y index=2, col sep = comma
]
{berlin/compare_Berlin_PP.csv};
\addlegendentry{individual};

\addplot[
mark=square*,line width=.3mm,color=blue, dashed
]
table[
header=false, x index=0, y index=4, col sep = comma
]
{berlin/compare_Berlin_PP.csv};
\addlegendentry{individual};

\addplot[
mark=none,line width=.3mm,color=red
]
table[
header=false, x index=0, y index=1, col sep = comma
]
{berlin/compare_Berlin_PP.csv};
\addlegendentry{average};	

\addplot[
mark=none,line width=.3mm,color=red, dashed
]
table[
header=false, x index=0, y index=3, col sep = comma
]
{berlin/compare_Berlin_PP.csv};
\addlegendentry{average};
\end{axis}
\end{tikzpicture}
\caption{Cache miss probability in Berlin network and Poisson process. In solid line the performance of the Poisson process. In dashed line the performance of the Berlin network. ($C=10, L = 2000, \lambda = 1.8324\cdot 10^{-5}, s = 1$)}
\label{fig:berlin}
\end{figure}
}

\JGb{
Next we evaluate the performance of coded and uncoded strategies on the topology
of a real wireless network. Similar to our study in~\cite{altman2013coding} we have taken the positions of 3G base stations provided
by the OpenMobileNetwork project \cite{OpenMobileNetwork}. The base stations are situated
in the area $1.95 \times 1.74$ kms around the TU-Berlin campus. One can see the
positions of the base stations from the OpenMobileNetwork project in Figure~\ref{fig:realdata}.
We note that the base stations of the real network are more clustered then in a typical realization of a Poisson process, because they are
typically situated along roads. We will analyze the performance of our placement strategies (which are optimal for a Poisson network) on this non-Poisson topology. There are $62$ base stations in our dataset, corresponding to an average density of $\lambda = 1.8324 \cdot 10^{-5}$. We use this density to derive the optimal placement strategies under individual and average capacity constraints for various values of $r$. The results are depicted in Figure~\ref{fig:berlin}, which also includes the results for a Poisson network with the same density. We observe that the clustering increases the cache miss probability, but that our results on the Poisson model approximate the performance on the real data set quite well. The difference between the Poisson case and our dataset is smaller for individual capacity constraints than it is for average capacity constraints. We cannot explain this difference with our current results and suggest as future work to develop an insight into this behavior.
}

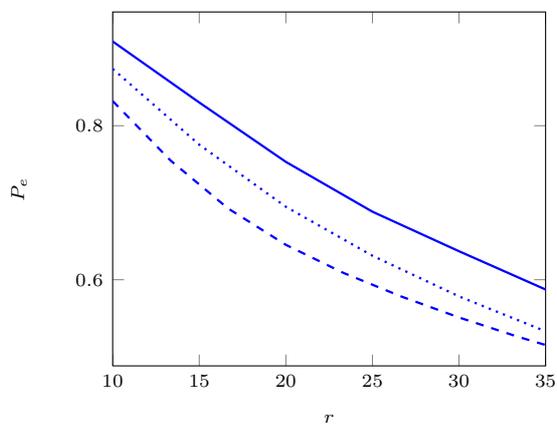
\begin{figure}
\centering
\begin{tikzpicture}
\begin{axis}[
  width=.6\linewidth,
  xlabel=$r$,ylabel=$P_e$, 
xmin=10,xmax=35,
  font=\scriptsize,
]

\addplot[
  mark=none,line width=.3mm,color=blue,dashed,
  mark=none
  ]
table[
  header=false,x index=0,y index=2,
  ]
{matlab_figures/hardsoftL2000C10.csv};

\addplot[
  mark=none,line width=.3mm,color=blue,solid,
  mark=none
  ]
table[
  header=false,x index=0,y index=1,
  ]
{new_matlab_figures/simC10.csv};

\addplot[
  mark=none,line width=.3mm,color=blue,dotted,
  mark=none
  ]
table[
  header=false,x index=0,y index=1,
  ]
{new_matlab_figures/approxC10.csv};

\end{axis}
\end{tikzpicture}
\caption{Dynamic scenario. In solid line performance of the \JGb{fully distributed} LRU policy. In dashed line the performance under optimal allocation of Section~\ref{sec:average}. In dotted line the performance of a single LRU cache with capacity equal to expected sum capacity of all caches that are within range of a client. ($C=10$, $L=2000$, $\lambda=2\cdot 10^{-3}$, $s=1$)}
\label{fig:simC10}
\end{figure}

\begin{figure}
\centering
\begin{tikzpicture}
\begin{axis}[
  width=.6\linewidth,
  xlabel=$r$,ylabel=$P_e$, 
xmin=10,xmax=35,
  font=\scriptsize,
]

\addplot[
  mark=none,line width=.3mm,color=blue,dashed,
  mark=none
  ]
table[
  header=false,x index=0,y index=2,
  ]
{matlab_figures/hardsoftL2000C50.csv};

\addplot[
  mark=none,line width=.3mm,color=blue,solid,
  mark=none
  ]
table[
  header=false,x index=0,y index=1,
  ]
{new_matlab_figures/simC50.csv};

\addplot[
  mark=none,line width=.3mm,color=blue,dotted,
  mark=none
  ]
table[
  header=false,x index=0,y index=1,
  ]
{new_matlab_figures/approxC50.csv};

\end{axis}
\end{tikzpicture}
\caption{Dynamic scenario. In solid line performance of the \JGb{fully distributed} LRU policy. In dashed line the performance under optimal allocation of Section~\ref{sec:average}. In dotted line the performance of a single LRU cache with capacity equal to expected sum capacity of all caches that are within range of a client. ($C=50$, $L=2000$, $\lambda=2\cdot 10^{-3}$, $s=1$)}
\label{fig:simC50}
\end{figure}
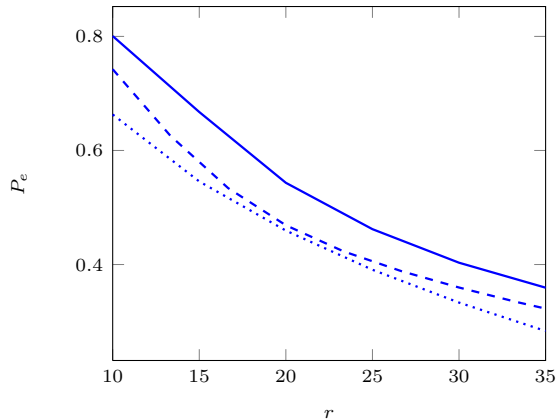

\subsection{Dynamic Setting}
Finally we consider the dynamic setting of Section~\ref{sec:dynamic}. In Figures~\ref{fig:simC10} and~\ref{fig:simC50} we have depicted the cache miss probability as a function of the connection range $r$ for cache capacities $C=10$ and $C=50$, respectively. In solid lines  we have depicted the performance of the \JGb{fully distributed} LRU policy. In dashed lines we have depicted the performance under the optimal allocation strategy of Section~\ref{sec:average}. Finally, we have depicted in dotted lines the performance of a `centralized' LRU policy, \ie we depict the performance of a single LRU cache with capacity equal to expected sum capacity of all caches that are within range of a client.

From Figures~\ref{fig:simC10} and~\ref{fig:simC50} it is clear that the performance of the \JGb{fully distributed} LRU policy is not too far from the performance of the optimal allocation strategy of Section~\ref{sec:average}. Another observation is that the performance difference between our distributed and the `centralized' LRU policy is small. Therefore, our \JGb{distributed} LRU-based policy with
caching in a closest storage device can be safely employed in practice for geographically distributed caching.

\section{Discussion}
\label{sec:disc}

In the current paper we have obtained structural insight into optimal storage allocation \JG{strategies} in a network of wireless caching devices
in a stochastic geometry. We have seen that for the design of geographically distributed caching devices it is better to use average than
per cache capacity constraint. We indicate that our model can be practically applied even for 
non-homogeneous distribution of base stations when the rate of density change is not too rapid.
We have also considered a dynamic setting for which we proposed a simple distributed LRU-based policy. We have shown that the performance of this LRU policy is not far from the optimal one, and consequently, this LRU-based policy can be safely employed in practice for geographically distributed caching. Part of the analysis in this paper considered the particular case that files consist of a single packet. In future work we will generalize this analysis. In addition we will consider the dynamic setting in more detail \JGb{by extending the model to include the latencies of fetching a file from a server and analyzing the overall file delivery latency.} In particular, the aim is to obtain a more fundamental insight into the behavior of LRU and others replacement policies in networks of wireless caches in a stochastic geometry setting. \JGb{A first step in understanding this behavior is to generalize the cache placement strategies from this paper to strategies that allow for a different (deterministic) placement of files in each of the caches. The optimal hit probability under such strategies can then serve as a baseline for online dynamic strategies. Also, it will enable to study non-homogeneous spatial Poisson processes for base station placement or more general placement in a natural way. Investigating different placement in each cache is part of our ongoing efforts as well as~\cite{chattopadhyay2016gibbsian}. Analyzing various online dynamic strategies can be approached by considering TTL caches~\cite{fofack2014performance}, which are known to cover many other strategies by carefully choosing the TTL distributions~\cite{dehghan2016utility}.}

%
%
%
 \section*{Acknowledgement}

 \JG{This work was performed while Xinwei Bai was visiting INRIA Sophia Antipolis in fall 2013. This work was supported in part by the Netherlands Organization for Scientific Research (NWO) grant 612.001.107.}

%
%
%
\bibliographystyle{IEEEtran}
\bibliography{IEEEabrv,abg_r1}

\end{document}